\theoremstyle{plain}
\numberwithin{equation}{section}
\newtheorem{thm}{Theorem}[section]
\newtheorem{lem}[thm]{Lemma}
\theoremstyle{definition}
\newtheorem{example}{Example}
\theoremstyle{definition}
\newtheorem{problem}{Problem}
\newcommand{\complex}{{\mathbb C}}
\newcommand{\Natural}{{\mathbb N}}
\newcommand{\real}{{\mathbb R}}
\newcommand{\rmcyl}{\mathrm{cyl\,}}   
\newcommand{\ascript}{\mathcal{A}}
\newcommand{\cscript}{\mathcal{C}}
\newcommand{\muhat}{\widehat{\mu}}
\newcommand{\ab}[1]{\left|#1\right|}
\newcommand{\brac}[1]{\left\{#1\right\}}
\newcommand{\paren}[1]{\left(#1\right)}
\newcommand{\sqbrac}[1]{\left[#1\right]}
\newcommand{\sqparen}[1]{{\left[#1\right)}} 
\begin{document}

\title{FOUR RELATED COMBINATORIAL PROBLEMS}
\author{Stan Gudder}
\address{Department of Mathematics, 
University of Denver, Denver, Colorado 80208}
\email{sgudder@du.edu}
\date{}
\maketitle

\begin{abstract}
This pedagogical article solves an interesting problem in quantum measure theory. Although a quantum measure $\mu$ is a generalization of an ordinary probability measure, $\mu$ need not satisfy the usual additivity condition. Instead, $\mu$ satisfies a grade-2 additivity condition. Besides the quantum measure problem, we present three additional combinatorial problems. These are 
(1)\enspace A sum of binomial coefficients problem;\enspace
(2)\enspace A recurrence relation problem; and\enspace
(3)\enspace An interated vector problem. We show that these three problems are equivalent in that they have a common solution. We then show that this solves the original quantum measure problem.
\end{abstract}

\section{Introduction}  
An elegant combinatorial proof for the sums of evenly spaced binomial coefficients is given in \cite{bck10}. The present article shows that a special case of this result can be used to solve an interesting problem in quantum measure theory. In more detail we present four combinatorial problems which at first sight appear to be unrelated. We then show that the last three problems are equivalent; that is, they have the same solution, and the solution to the first problem follows from the other three. Although the first problem from quantum measure theory appears to be quite difficult, viewing it in one of the three other ways suggests methods for its solution. We then present three quite different ways to solve the problems all of which differ from the method in \cite{bck10}

\section{A Quantum Measure Problem}  
This problem from quantum measure theory \cite{gou72,gud101,gui95,lov07} is the first of these four problems that the author attempted to solve. Suppose we have a particle that can occupy one of two sites which we call 0 and 1. The particle begins at site 0 at time $t=0$ and then evolves in discrete time steps $t=1,2,\ldots\,$. We do not know the exact motion of this particle and all we know is the probability that the particle remains at its present site or moves to the other site in one time step.

If this were a classical particle we would have a 2-site random walk which is a process governed by a stochastic matrix $T=\sqbrac{t_{jk}}$, $j,k=0,1$, where $t_{jk}$ is the transition probability that the particle moves from site $k$ to site $j$ in one time step. Hence, $0\le t_{jk}\le 1$ and 
\begin{equation*}
t_{00}+t_{10}=t_{01}+t_{11}=1
\end{equation*}
Since the particle begins at site 0, the initial distribution is given by the vector $\left[\begin{smallmatrix}\smallskip 1\\0\end{smallmatrix}\right]$.
At time $t=1$, the distribution becomes
\begin{equation*}
T\begin{bmatrix}1\\0\end{bmatrix}=\begin{bmatrix}t_{00}&t_{01}\\t_{10}&t_{11}\end{bmatrix}\begin{bmatrix}1\\0\end{bmatrix}
  =\begin{bmatrix}t_{00}\\t_{10}\end{bmatrix}
\end{equation*}
Thus, at $t=1$ the particle is at 0 with probability $t_{00}$ and at 1 with probability $t_{10}$. At time $t=n$ the distribution of the particle is
$T^n\left[\begin{smallmatrix}\smallskip 1\\0\end{smallmatrix}\right]$. Now consider a path $\gamma _0\gamma _1\cdots\gamma _n$,
$\gamma _0=0$, $\gamma _i\in\brac{0,1}$, $i=1,2,\ldots ,n$. The probability that the particle moves along this path is
\begin{equation*}
t_{\gamma _1\gamma _0}t_{\gamma _2\gamma _1}\cdots t_{\gamma _n\gamma _{n-1}}
\end{equation*}
For example at $t=2$ there are four possible paths: $000$, $001$, $010$ and $011$. The probabilities for these paths are $t_{00}t_{00}$,
$t_{00}t_{10}$, $t_{10}t_{01}$ and $t_{10}t_{11}$, respectively. Of course, the sum of these probabilities is
\begin{equation*}
t_{00}(t_{00}+t_{10})+t_{10}(t_{01}+t_{11})=t_{00}+t_{10}=1
\end{equation*}

But the particle under consideration is not a classical particle, it is a quantum particle so we have a 2-site quantum random walk. In this case, we cannot use the classical theory of stochastic processes but must employ the formalism of quantum mechanics. The initial distribution is still
$\left[\begin{smallmatrix}\smallskip 1\\0\end{smallmatrix}\right]$ but now $\left[\begin{smallmatrix}\smallskip 1\\0\end{smallmatrix}\right]$ is considered to be a complex vector in $\complex ^2$ and $T$ is replaced by a unitary matrix $U=\sqbrac{u_{jk}}$ on $\complex ^2$. At time $t=n$, the vector
\begin{equation*}
\begin{bmatrix}v_0\\v_1\end{bmatrix}=U^n\begin{bmatrix}1\\0\end{bmatrix}
\end{equation*}
now gives the ``amplitude'' distribution so that $\ab{v_0}^2$ and $\ab{v_1}^2$ are the probabilities the particle is at 0 and 1, respectively,
at time $n$. The \textit{amplitude} of the path $\gamma=\gamma _0\gamma _1\ldots\gamma _n$ becomes
\begin{equation}                
\label{eq21}
a(\gamma )=u_{\gamma _1\gamma _0}u_{\gamma _2\gamma _1}\cdots u_{\gamma _n\gamma _{n-1}}
\end{equation}
and the probability the particle moves along $\gamma$ is $\ab{a(\gamma )}^2$.

We shall assume that $U$ has the simplest nontrivial form
\begin{equation*}
U=\tfrac{1}{\sqrt{2}}\begin{bmatrix}1&i\\i&1\end{bmatrix},\quad i=\sqrt{-1}
\end{equation*}
In this case, the particle remains at its present site for one time step with amplitude $1/\sqrt{2}$ and moves to the other site with amplitude $i\sqrt{2}$.
The evolution of the particle is described by an infinite bit string $\gamma =\gamma _0\gamma _1\cdots$, $\gamma _0=0$, $\gamma _j\in\brac{0,1}$, $j=1,2,\ldots$, which specifies that the particle is at site $\gamma _j$ at time $j$. We call $\gamma$ a \textit{path} or \textit{trajectory} of the particle. We also consider
$n$-\textit{paths} $\gamma =\gamma _0\gamma _1\cdots\gamma _n$ which specify sites occupied by the particle at times $0,1,\ldots ,n$. The general
\textit{sample space} is the set
\begin{equation*}
\Omega =\brac{\gamma\colon\gamma\hbox{ a path}}
\end{equation*}
and the \textit{time}-$n$ \textit{sample space} is the set
\begin{equation*}
\Omega _n=\brac{\gamma\colon\gamma\hbox{ an }n\hbox{-path}}
\end{equation*}
Notice that the cardinality $\ab{\Omega _n}=2^n$.

By \eqref{eq21} the amplitude of an $n$-path $\gamma =\gamma _0\gamma _1\cdots\gamma _n$ is
\begin{equation}                
\label{eq22}
a(\gamma )=\tfrac{1}{2^{n/2}}\,i^{\ab{\gamma _1-\gamma _0}}i^{\ab{\gamma _2-\gamma _1}}\cdots i^{\ab{\gamma _n-\gamma _{n-1}}}
\end{equation}
For example, for the 4-path $\gamma =01101$ we have $a(\gamma )=-i/4$. If $\gamma =\gamma _0\gamma _1\cdots\gamma _n$ and
$\gamma '=\gamma '_0\gamma '_1\cdots\gamma '_n$ are $n$-paths, their \textit{decoherence} is defined by
\begin{equation}                
\label{eq23}
D^n(\gamma ,\gamma ')=a(\gamma )\overline{a(\gamma ')}\delta _{\gamma _n,\gamma '_n}
\end{equation}
Here, $\delta _{j_k}$ is the Knonecker delta so $D^n(\gamma ,\gamma ')=0$ unless $\gamma and \gamma '$ end at the same site. Denoting the collection of all subsets of $\Omega _n$ by $\ascript _n$, we call the elements of $\ascript _n$ \textit{time}-$n$ \textit{events} and define the \textit{quantum measure}
$\mu _n\colon\ascript _n\to\real$ by
\begin{equation*}
\mu _n(A)=\sum\brac{D^n(\gamma ,\gamma ')\colon\gamma ,\gamma '\in A}
\end{equation*}

\begin{example}  
For $n=2$, let $\omega _0=000$, $\omega _1=001$, $\omega _2=010$, $\omega _3=011$ so that
$\Omega _2=\brac{\omega _0,\omega _1,\omega _2,\omega _3}$. The reader can now verify that $\mu _2\paren{{\emptyset}}=0$, $\mu _2(\brac{\omega _j})=1/4$,
$j=0,1,2,3$, $\mu _2\paren{\brac{\omega _0,\omega _2}}=0$\newline
$\mu _2\paren{\brac{\omega _0,\omega _1}}=\mu _2\paren{\brac{\omega _0,\omega _3}}=\mu_2\paren{\brac{\omega _1,\omega _2}}
   =\mu _2\paren{\brac{\omega _2,\omega _3}}=1/2$\newline
$\mu _2\paren{\brac{\omega _1,\omega _3}}=1$, $\mu _2\paren{\brac{\omega _0,\omega _1,\omega _2}}=1/4$, $\mu _2(\Omega _2)=1$ and\newline
$\mu _2\paren{\brac{\omega _0,\omega _1,\omega _3}}=\mu _2\paren{\brac{\omega _1,\omega _2,\omega _3}}=5/4$\hfill\qedsymbol
\end{example}

Readers can try their hand at showing that $\mu _n$ satisfies:
\begin{enumerate}[(a)]
\item $\mu _n(A)\ge 0$ for all $A\in\ascript _n$.
\item $\mu _n(\Omega _n)=1$.
\end{enumerate}
We will indicate later why (a) and (b) hold. Although $\mu _n$ is sometimes thought of as a quantum probability, Example~1 shows that $\mu _n$ does not satisfy some of the properties of a probability measure. For example, $A\subseteq B$ does not imply that $\mu _n(A)\le\mu _n(B)$ and
$\mu _n(A\cup B)\ne\mu _n(A)+\mu _n(B)$ whenever $A\cap B=\emptyset$. However, it can be shown that $\mu _n$ satisfies the following
\textit{grade-2 additivity condition} \cite{gud101,gud102,sor94,sor07}.
  
\begin{enumerate}[(c)]
\item If $A,B,C\in\ascript _n$ are mutually disjoint, then
\end{enumerate}
\begin{equation*}
\mu _n(A\cup B\cup C)=\mu _n(A\cup B)+\mu _n(A\cup C)+\mu _n(B\cup C)-\mu _n(A)-\mu _n(B)-\mu _n(C)
\end{equation*}
Even though $\mu _n$ is not a traditional probability measure, $\mu_n(A)$ is thought of as the propensity that the event $A$ occurs.

We have defined $\mu _n\colon\ascript _n\to\real$ but we are also interested in the quantum measure of events $A\subseteq\Omega$. Can we extend $\mu _n$ to such sets? As a first step, we define a set $A\subseteq\Omega$ to be a \textit{cylinder set} if $A$ has the form
\begin{equation}                
\label{eq24}
A=A_1\times\brac{0,1}\times\brac{0,1}\times\cdots
\end{equation}
where $A_1\in\ascript _n$ for some $n\in\Natural$. For example, if $\gamma '=\gamma '_0\gamma '_1\cdots\gamma '_n\in\Omega _n$ we define the cylinder set
\begin{equation*}
\rmcyl (\gamma ')=\brac{\gamma '}\times \brac{0,1}\times\brac{0,1}\times\cdots =\brac{\gamma\in\Omega\colon\gamma _j=\gamma '_j, j=0,1,\ldots ,n}
\end{equation*}
We call $\rmcyl (\gamma ')$ an \textit{elementary cylinder set} and it is clear that every cylinder set is a disjoint union of a finite number of elementary cylinder sets. We denote the collection of cylinder sets by $\cscript (\Omega )$ and $\cscript (\Omega )$ is closed under complements and finite unions and intersections. For a cylinder set of the form \eqref{eq24}, we define $\mu (A)=u _n(A_1)$. It is not hard to show that $\mu\colon\cscript (\Omega )\to\real$ is well-defined and has the properties (a), (b), (c) of a quantum measure.

We now come to a deep and important problem of quantum measure theory. This problem is at least 60 years old and goes back to work on Feynman integrals
\cite{bur07,djs10,sor94,sor07}. Can we extend $\mu\colon\cscript (\Omega )\to\real$ to other physically important subsets of $\Omega$ in a meaningful and systematic way? For example, suppose $A\subseteq\Omega$ and we can identify a natural decreasing sequence $A_n\in\cscript (\Omega )$ such that $A=\cap A_n$. If
$\lim\mu (A_n)$ exists, then we can define $\muhat (A)=\lim\mu (A_n)$ and $\muhat$ would give a meaningful extension of $\mu$. Unfortunately, there are examples of decreasing sequences $A_n\in\cscript (\Omega )$ for which $\lim\mu (A_n)$ does not exist and we can even have $\lim\mu (A_n)=\infty$ \cite{djs10}. Of course, we are considering a very simple case. Of more physical interest would be an infinite number of sites (say all of $\real ^3$) and a continuum of times $t\in\sqparen{0,\infty}$. However, this simple case still illustrates the difficulties involved.

\begin{example}  
For $\gamma\in\Omega$, the set $\brac{\gamma}\subseteq\Omega$ is certainly physically important and $\brac{\gamma}\notin\cscript (\Omega )$. Let
$\gamma =\gamma _0\gamma _1\cdots$ and form the sets $A_n=\rmcyl (\gamma _0\gamma _1\cdots\gamma _n)$. Then $A_n$ is a decreasing sequence of sets in $\cscript (\Omega )$ such that $\brac{\gamma}=\cap A_n$. The set $A_n$ can be naturally viewed as the time-$n$ approximation to ${\gamma}$. As in Example~1, it is easy to show that $\mu _n\paren{\brac{\omega}}=1/2^n$ for every $\omega\in\Omega _n$. Hence,
\begin{equation*} 
\lim _{n\to\infty}\mu (A_n)=\lim _{n\to\infty}\mu _n\paren{\brac{\gamma _0\gamma _1\cdots\gamma _n}}=\lim _{n\to\infty}\tfrac{1}{2^n}=0
\end{equation*}
so we can define $\muhat\paren{\brac{\gamma}}=0$ in a meaningful way. A similar analysis gives that $\muhat (A)=0$ for any finite set $A\subseteq\Omega$.
\hfill\qedsymbol
\end{example}

The next logical step after Example~2 is to consider the complement $\brac{\gamma}'$ of $\brac{\gamma}$, As before, let $\gamma =\gamma _0\gamma _1\cdots$ and let $A_n=\rmcyl (\gamma _0\gamma _1\cdots\gamma _n)$. Then $A'_n$ is an increasing sequence in $\cscript (\Omega )$ and $\brac{\gamma}'=\cup A'_n$. Although our result will hold for every $\gamma\in\Omega$, for simplicity let us assume that $\gamma _i=0$, $i=0,1,\ldots\,$. The set $\brac{\gamma}'$ represents the important physical event that the particle ever reaches site~1. In a similar way, $A=\brac{0111\cdots}'$ represents the event that the particle ever returns to site~0 and $\mu (A)$, if it is defined, would give the propensity (probability) of return to the origin. We now want to find $\lim\mu (A'_n)$ provided it exists. To this end, and it is also of interest in its own right, we would like to find $\mu (A'_n)$. Unlike the simple computation $\mu (A_n)=1/2^n$, finding $\mu (A'_n)$ is much more difficult. For instance $\mu (A'_1)=1/2$ and Example~1 shows that 
\begin{equation*}
\mu (A'_2)=\mu\paren{\brac{\omega _1,\omega _2,\omega _3}}=5/4
\end{equation*}
A little work gives $\mu (A'_3)=13/8$, $\mu (A'_4)=25/16$ and as we shall see $\mu (A'_n)$ exhibits an oscillatory behavior for increasing $n$. To alleviate the reader's suspense, we shall eventually show the expected result $\lim\mu (A'_n)=1$. We thus arrive at our first problem.

\begin{problem}   
Find $\mu (A'_n)$, $n=1,2,\ldots\,$.
\end{problem}

\section{Sums of Binomial Coefficients}  
It is well-known that
\begin{equation}                
\label{eq31}
\sum _{j=0}^n\binom{n}{j}=2^n
\end{equation}
The easiest way to prove \eqref{eq31} is to let $a=b=1$ in the binomial theorem
\begin{equation}                
\label{eq32}
(a+b)^n=\sum _{j=0}^n\binom{n}{j}a^jb^{n-j}
\end{equation}
Another way to prove \eqref{eq31} is to note that the right side of \eqref{eq31} is the total number of subsets of a set $S$ of cardinality $n$ and that
$\binom{n}{j}$ is the number of $j$ element subsets of $S$. It is less well-known that if $n$ is even then
\begin{equation}                
\label{eq33}
\sum _{j=0}^{n/2}\binom{n}{2j}=\sum _{j=0}^{n/2-1}\binom{n}{2j+1}=2 ^{n-1}
\end{equation}
and a similar result holds if $n$ is odd. We can prove \eqref{eq33} by letting $a=-1$, $b=1$ in \eqref{eq32} to obtain
\begin{equation}                
\label{eq34}
\sum _{j=0}^n(-1)^j\binom{n}{j}=0
\end{equation}
Adding \eqref{eq34} to \eqref{eq31} and subtracting \eqref{eq34} from \eqref{eq31} gives \eqref{eq33}. We can combine the even and odd cases to obtain the following formula that holds for all $n\in\Natural$. Letting $\lfloor{x}\rfloor$ be the floor function which is the largest integer less than or equal to $x$ we have
\begin{equation}                
\label{eq35}
\sum _{j=0}^{\lfloor\frac{n}{2}\rfloor}\binom{n}{2j}=\sum _{j=0}^{\lfloor\frac{n-1}{2}\rfloor}\binom{n}{2j+1}=2^{n-1}
\end{equation}

Equation \eqref{eq35} gives sums of every second binomial coefficient. What about sums of every fourth binomial coefficient? For example,
\break
\begin{align}                
\label{eq36}
\binom{12}{0}&+\binom{12}{4}+\binom{12}{8}+\binom{12}{12}=992\notag\\
\binom{12}{1}&+\binom{12}{5}+\binom{12}{9}=1024\\
\binom{12}{2}&+\binom{12}{6}+\binom{12}{10}=1056\notag\\
\binom{12}{3}&+\binom{12}{7}+\binom{12}{11}=1024\notag
\end{align}
A closed form formula for sums of every fourth binomial coefficient is not well-known. Formulas for sums of evenly spaced binomial coefficients are given in
\cite{gou72,gui95,lov07} and most recently a combinatorial proof appeared in \cite{bck10}.

We then arrive at our second problem.

\begin{problem}   
For $n\in\Natural$ and $j=0,1,2,3$ find a closed form solution of
\begin{equation}                
\label{eq37}
\binom{n}{j}+\binom{n}{j+4}+\binom{n}{j+8}+\cdots +\binom{n}{4\lfloor\frac{n-j}{4}\rfloor +j}
\end{equation}
\end{problem}

\section{Recurrence Relations}  
Let $s,t,u\colon\Natural\to\real$ be sequences satisfying the initial condition
\begin{equation}                
\label{eq41}
s(1)=t(1)=1,\quad u(1)=v(1)=0
\end{equation}
and the simultaneous recurrence relations (also called difference equations):
\begin{align}                
\label{eq42}
s(n+1)&=s(n)+v(n)\notag\\
t(n+1)&=t(n)+s(n)\\
u(n+1)&=u(n)+t(n)\notag\\
v(n+1)&=v(n)+u(n)\notag
\end{align}
We would like to find the solutions $s,t,u,v$ of \eqref{eq41} and \eqref{eq42}. To get an idea of these solutions, we give a table of their first few values. It is of interest to compare these values to $2^{n-2}$. Also, the fact that the numbers in \eqref{eq36} and the numbers in line $n=12$ agree is no coincidence.
\vskip 2pc

\begin{tabular}{|c|c|c|c|c|c|}
\hline
$n$&$s(n)$&$t(n)$&$u(n)$&$v(n)$&$2^{n-2}$\\
\hline
1&1&1&0&0&$1/2$\\
\hline
2&1&2&1&0&1\\
\hline
3&1&3&3&1&2\\
\hline
4&2&4&6&4&4\\
\hline
5&6&6&10&10&8\\
\hline
6&16&12&16&20&16\\
\hline
7&36&28&28&36&32\\
\hline
8&72&64&56&64&64\\
\hline
9&136&136&120&120&128\\
\hline
10&256&272&256&240&256\\
\hline
11&496&528&528&496&512\\
\hline
12&992&1024&1056&1024&1024\\
\hline
13&2016&2016&2080&2080&2048\\
\hline
14&4096&4032&4096&4160&4096\\
\hline
15&8256&8128&8128&8256&8192\\
\hline
\noalign{\smallskip}
\multicolumn{6}{c}%
{\textbf{Table 1}}\\
\end{tabular}
\vskip 2pc

The next result shows that we can replace the four simultaneous recurrence relations \eqref{eq42} by a single higher order recurrence relation.

\begin{thm}    
\label{thm41}
The sequences, $s,t,u,v$ satisfy \eqref{eq41}, \eqref{eq42} if and only if they satisfy the following third order recurrence relation:
\begin{equation}                
\label{eq43}
w (n+3)=4w(n+2)-6w(n+1)+4w(n)
\end{equation}
together with the initial conditions
\begin{align}                
\label{eq44}
s(1)&=s(2)=s(3)=1;\ t(1)=1,t(2)=2, t(3)=3;\notag\\
u(1)&=0, u(2)=1, u(3)=3;\ v(1)=v(2)=0,v(3)=1
\end{align}
\end{thm}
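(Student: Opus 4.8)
The plan is to prove the two implications separately, exploiting the cyclic structure of \eqref{eq42}; throughout I write $\Delta w(n)=w(n+1)-w(n)$ for the forward difference.

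For the forward implication --- that a quadruple satisfying \eqref{eq41} and \eqref{eq42} also satisfies \eqref{eq43} and \eqref{eq44} --- I first read off \eqref{eq44}: applying \eqref{eq42} twice to the data \eqref{eq41} gives $s(2)=s(3)=1$, $t(2)=2$, $t(3)=3$, $u(2)=1$, $u(3)=3$, $v(2)=0$, $v(3)=1$, which is exactly \eqref{eq44}. The substance of this direction is that \eqref{eq42} chains the four sequences into a single fourth-difference identity. Rewriting \eqref{eq42} as $\Delta s=v$, $\Delta v=u$, $\Delta u=t$, $\Delta t=s$ and substituting successively gives $v=\Delta s$, $u=\Delta^2 s$, $t=\Delta^3 s$, and finally $s=\Delta t=\Delta^4 s$; that is, $\Delta^4 s(n)=s(n)$ for every $n\ge 1$. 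Expanding $\Delta^4$ by the binomial theorem, $\Delta^4 s(n)=s(n+4)-4s(n+3)+6s(n+2)-4s(n+1)+s(n)$, and cancelling the common $s(n)$ leaves $s(n+4)=4s(n+3)-6s(n+2)+4s(n+1)$, which is precisely \eqref{eq43} with $n$ replaced by $n+1$. Starting the same cyclic substitution from $t$, $u$, or $v$ yields $\Delta^4 t=t$, $\Delta^4 u=u$, $\Delta^4 v=v$, so each of $s,t,u,v$ obeys \eqref{eq43} for all $n\ge 2$.

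This leaves only the base index $n=1$ of \eqref{eq43}, which the difference-chain argument does not supply, since it produces \eqref{eq43} shifted up by one. I dispatch it by direct computation: \eqref{eq42} gives $s(4)=2$, $t(4)=4$, $u(4)=6$, $v(4)=4$, and a one-line check against \eqref{eq44} confirms $w(4)=4w(3)-6w(2)+4w(1)$ for each $w\in\brac{s,t,u,v}$. Hence \eqref{eq43} holds for all $n\ge 1$, finishing the forward implication.

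For the converse I prefer a uniqueness argument to reversing the substitutions. Being third order, \eqref{eq43} determines a sequence uniquely from its first three values, so \eqref{eq43} together with \eqref{eq44} pins down the quadruple $(s,t,u,v)$ completely. But \eqref{eq41}, \eqref{eq42} is a first-order system that likewise determines a unique quadruple, and the forward implication shows that this quadruple satisfies \eqref{eq43} and \eqref{eq44}. By uniqueness the two quadruples coincide, so any solution of \eqref{eq43}, \eqref{eq44} is the solution of \eqref{eq41}, \eqref{eq42}, and in particular satisfies the latter. I expect the main obstacle to be bookkeeping rather than ideas: the telescoping identity $\Delta^4 w=w$ delivers \eqref{eq43} only for $n\ge 2$, so the proof is incomplete until the $n=1$ base case is checked by hand, and the converse must correctly invoke uniqueness for a genuinely three-initial-value recurrence. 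Should a constructive converse be wanted instead, I would show that $f(n)=s(n+1)-s(n)-v(n)$ and its three cyclic analogues each satisfy \eqref{eq43} with vanishing initial data, and therefore vanish identically.
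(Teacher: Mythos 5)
Your proof is correct, but it takes a genuinely different route from the paper's in both directions. For \eqref{eq41}, \eqref{eq42} $\Rightarrow$ \eqref{eq43}, the paper computes $4s(n+2)-6s(n+1)+4s(n)$ and $s(n+3)$ separately, reducing each by repeated substitution (plus the auxiliary identity $t(n)+v(n)=s(n)+u(n)$) to the common expression $2s(n)+4u(n)+2v(n)$; your observation that the cyclic system is exactly $\Delta^4 w=w$, so that the binomial expansion of $\Delta^4$ hands you \eqref{eq43} immediately, is slicker and explains \emph{why} the coefficients $4,-6,4$ appear. The price is the off-by-one shift, and you correctly notice that $\Delta^4 w=w$ only yields \eqref{eq43} for $n\ge 2$ and patch $n=1$ by computing $s(4),t(4),u(4),v(4)$ directly --- a step the paper's version does not need. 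For the converse the paper runs an induction, using \eqref{eq43} applied to $v$ and to $s$ together with the inductive hypothesis to recover $s(n+2)=s(n+1)+v(n+1)$; your uniqueness argument (a third-order recurrence with three prescribed initial values has exactly one solution, as does the first-order system \eqref{eq41}, \eqref{eq42}, and the forward implication shows these unique solutions coincide) is logically sound and shorter, at the cost of being nonconstructive. Both of your substitutes are legitimate; the uniqueness argument in particular is a standard and clean way to dispose of the reverse implication once one direction is in hand.
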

\begin{proof}
Suppose $s,t,u,v$ satisfy \eqref{eq43}, \eqref{eq44}. It is easy to check that $s,t,u,v$ satisfy \eqref{eq41}, \eqref{eq42} for $n=1,2,3$. Proceeding by induction on $n$, suppose $s,t,u,v$ satisfy \eqref{eq42} for $1,2,\ldots ,n$, where $n\ge 3$. Then by \eqref{eq43} we have
\begin{align*}
v(n+1)&+s(n+1)\\
&=4v(n)-6v(n-1)+4v(n-2)+4s(n)-6s(n-1)+4s(n-2)\\
&=4s(n+1)-6s(n)+4s(n-1)=s(n+2)
\end{align*}
It follows by induction that the first equation in \eqref{eq42} holds and the other equations in \eqref{eq42} hold in a similar manner. Conversely, suppose $s,t,u,v$ satify \eqref{eq41}, \eqref{eq42}. It is straightforward to check that $s,t,u,v$ satisfy \eqref{eq44}. To show that $s$ satisfies \eqref{eq43} we have
\begin{align}                
\label{eq45}
4s(n+2)-6s&(n+1)+4s(n)\notag\\
&=4s(n+1)+4v(n+1)-6s(n)-6v(n)+4s(n)\notag\\
&=4s(n+1)+4v(n+1)-2s(n)-6v(n)\notag\\
&=4s(n)+4v(n)+4v(n)+4u(n)-2s(n)-6v(n)\notag\\
&=2s(n)+4u(n)+2v(n)
\end{align}
Applying \eqref{eq42} we have
\begin{align}                
\label{eq46}
t(n+1)+v(n+1)&=s(n)+t(n)+u(n)+v(n)\notag\\
   &=s(n+1)+u(n+1)
\end{align}
It follows from \eqref{eq42} and \eqref{eq46} that
\begin{align}                
\label{eq47}
s(n+3)&=s(n+2)+v(n+2)=s(n+1)+2v(n+1)+u(n+1)\notag\\
   &=s(n)+3v(n)+3u(n)+t(n)\notag\\
   &=s(n)+2v(n)+3u(n)+\sqbrac{t(n)+v(n)}\notag\\
   &=2s(n)+4u(n)+2v(n)(((
\end{align}
Comparing \eqref{eq45} and \eqref{eq47} shows that $s$ satisfies \eqref{eq43}. By symmetry $t,u,v$ satisfy \eqref{eq43}.
\end{proof}

\begin{problem}   
Solve \eqref{eq43} with initial conditions \eqref{eq44}.
\end{problem}

\section{Iterated Vector}  
Let $1_n$ be the vector in $\real ^{2^n}$ all of whose components is $1$. Start with the vector $z(1)=(0,1)\in\real ^2$ and define the vectors $z(n)\in\real ^{2^n}$ recursively by
\begin{equation}                
\label{eq51}
z(n+1)=\paren{z(n),z(n)+1_n}
\end{equation}
For example,
\begin{align*}
z(2)=\paren{z(1),z(1)+1_1}=\paren{(0,1),(0,1)+(1,1)}=(0,1,1,2)\\
\intertext{and}
z(3)=\paren{z(2),z(2)+1_2}=(0,1,1,2,1,2,2,3)
\end{align*}
Now components of $z(n)$ have values in $\brac{0,1,\ldots ,n}$ and we are interested in the number of components $c_j(n)$ that equal $j{\pmod 4}$, $j=0,1,2,3$.

\begin{example} 
We have that $c_0(1)=c_1(1)=1$, $c_2(1)=c_3(1)=0$;
\begin{align*}
c_0(2)=1,\ c_1(2)=2,\ c_2(2)=1,\ c_3(2)=0\\
c_0(3)=1,\ c_1(3)=3,\ c_2(3)=3,\ c_3(3)=1
\end{align*}
and since
\begin{equation*}
z(4)=(0,1,1,2,1,2,2,3,1,2,2,3,2,3,3,4)
\end{equation*}
we get
\begin{equation*}
c_0(4)=2,\ c_1(4)=4,\ c_2(4)=6,\ c_3(4)=4
\end{equation*}
Compare these values with the first three lines of Table~1.\hfill\qedsymbol
\end{example}

For an application of this, let $z_j(n)$ be the $j$th component of $z(n)$. It is not hard to show that $z_j(n)$ is the number of $1$s in the $n$-bit binary representation of $j$, $j=0,1,\ldots ,2^n-1$. Then $c_j(n)$ is the number of integers that have $j$ $1$s $\pmod 4$ in their $n$-bit binary representation $j=0,1,\ldots ,2^n-1$.

\begin{example} 
For $n=1$, the $1$-bit representations are $0=0$, $1=1$ and $z_0(1)=0$, $z_1(1)=1$. For $n=2$, the $2$-bit representations are $0=00$, $1=01$, $2=10$, $3=11$ and $z_0(2)=0$, $z_1(2)=1$, $z_2(2)=1$, $z_3(2)=2$. For $n=3$, the $3$-bit representations are $0=000$, $1=001$, $2=010$, $3=011$, $4=100$, $5=101$, $6=110$, $7=111$ and $z_0(3)=0$, $z_1(3)=1$, $z_2(3)=1$, $z_3(3)=2$, $z_4(3)=1$, $z_5(3)=2$, $z_6(3)=2$, $z_7(3)=3$.\hfill\qedsymbol
\end{example}

\begin{problem}   
For $n\in\Natural$, find $c_j(n)$, $j=0,1,2,3$.
\end{problem}

\section{Relationships}  
This section considers relationships between our four stated problems. First consider Problem~4. Examining \eqref{eq51} we see that 
\begin{equation}                
\label{eq61}
c_j(n+1)=c_j(n)+c_{j-1}(n)
\end{equation}
for $j=0,1,2,3$ where, by convention, $c_{-1}(n)=c_3(n)$. Also, $c_0(1)=1$, $c_1(1)=1$, $c_2(1)=c_3(1)=0$. We conclude that the sequences, $c_0,c_1,c_2,c_3$ satisfy the same initial conditions \eqref{eq41} and recurrence relations \eqref{eq42} as $s,t,u,v$ in Section~4. Since the solutions are unique, we conclude that
$c_0=s$, $c_1=t$, $c_2=u$, $c_3=v$. Moreover, by Theorem~\ref{thm41} these solutions coincide with the solutions of \eqref{eq43} with initial conditions
\eqref{eq44}. Hence, Problems~3 and 4 are equivalent.

Next consider Problem~2 and let $b_j(n)$ be the expression given in \eqref{eq37}. We now apply the combinatorial identity 
\begin{equation}                
\label{eq62}
\binom{n+1}{j}=\binom{n}{j-1}+\binom{n}{j}\quad 1\le j\le n
\end{equation}
We can prove \eqref{eq62} analytically or by the following combinatorial argument. Consider a set $S$ with $\ab{S}=n+1$ and let $s\in S$. There are
$\binom{n}{j-1}$ subsets of $S$ with cardinality $j$ that contain $s$ and $\binom{n}{j}$ subsets of $S$ with cardinality $j$ not containing $s$. These sum to the total cardinality $\binom{n+1}{j}$ of subsets of $S$. Using the usual convention $\binom{n}{j}=0$ for $j<0$, by \eqref{eq62} we have that
\begin{equation}                
\label{eq63}
b_j(n+1)=b_j(n)+b_{j-1}(n)
\end{equation}
for $j=0,1,2,3$ where again $b_{-1}(n)=b_3(n)$. Also, $b_0(1)=b_1(1)=1$, $b_2(1)=b_3(1)=0$. Comparing \eqref{eq61} and \eqref{eq63} we concluded that the three problems, 2,3,4, are equivalent.

We now treat Problem~1 which was the original motivation for this study. Each $n$-path in $\Omega _n$ can be thought of as a binary representation of a nonnegative integer. For example, in $\Omega _2$, $000,001,010,011$ correspond to $0,1,2,3$. We can thus think of $\Omega _n$ as $\Omega _n=\brac{0,1,\ldots ,2^n-1}$. For $j\in\Omega _n$, let $s_j(n)$ be the number of site or bit switches (or bit flips) in the binary representation of $j$. For instance, in $\Omega _2$, $s_0(2)=0$, $s_1(2)=1$, $s_2(2)=2$, $s_3(2)=1$. It follows from \eqref{eq22} that the amplitude of $j\in\Omega _n$ is 
\begin{equation*}
a(j)=\tfrac{1}{2^{n/2}}\,i^{s_j(n)}
\end{equation*}
For nonnegative integers $j,k$ define their parity $p_{jk}$ by 
\begin{equation*}
p_{jk}=
\begin{cases}
1&\hbox{if $j$ and $k$ are both even or both odd}\\
0&\hbox{otherwise}
\end{cases}
\end{equation*}
Applying \eqref{eq32} the decoherence of $j,k\in\Omega _n$ becomes
\begin{equation}                
\label{eq64}
D^n(j,k)=\tfrac{1}{2^n}\,i^{\sqbrac{s_j(n)-s_k (n)}}p_{jk}
\end{equation}
We call the matrix $D^n$ with entries $D^n(j,k)$ the \textit{time}-$n$ \textit{decoherence matrix}, $j,k=0,1,\ldots ,2^n-1$. For example, we have
\begin{equation*}
D^2=\tfrac{1}{4}
\left[\begin{matrix}\noalign{\smallskip}1&0&-1&0\\0&1&0&1\\-1&0&1&0\\0&1&0&1\\\noalign{\smallskip}\end{matrix}\right]
\end{equation*}

The reader should check using \eqref{eq64} that the entries of $D^n$ are $0$ or $\pm 1/2^n$ and if $p_{jk}=1$ then $D_{jk}^n=1/2^n$ when $s_j(n)=s_k(n){\pmod 4}$ and $D_{jk}^n=-1/2^n$ when $s_j(n)\ne s_k(n){\pmod 4}$. Moreover,
\begin{equation}                
\label{eq65}
\sum _{j,k=0}^{2^n-1}D_{jk}^n=1
\end{equation}
For instance, we immediately see that $D^2$ has these properties. It follows from \eqref{eq65} that $\mu _n(\Omega _n)=1$. It can be shown that $D^n$ is a positive semi-definite matrix and this implies that $\mu _n(A)\ge 0$ for every $A\subseteq\Omega _n$.

Letting $A'_n\subseteq\Omega _n$ be the set of Problem~1, applying \eqref{eq65} we have
\begin{equation}                
\label{eq66}
\mu _n(A'_n)=\sum _{j,k=1}^{2^n-1}D_{jk}^n=1-\sum _{j=1}^{2^n-1}D_{j0}^n-\sum _{k=0}^{2^n-1}D_{0k}^n
\end{equation}
Since $D^n$ is symmetric, writing \eqref{eq66} in terms of the values of $D_{j0}^n$ gives
\begin{align}                
\label{eq67}
\mu _n(A'_n)&=1-\tfrac{1}{2^n}-\tfrac{1}{2^n-1}\sum _{j=2}^{2^n-2}\brac{i^{s_j(n)}\colon j\hbox{ even}}\notag\\
   &=1-\tfrac{1}{2^n}-\tfrac{1}{2^{n-1}}\sum _{j=1}^{2^{n-1}-1}i^{s_{2j}(n)}\notag\\
   &=1+\tfrac{1}{2^n}-\tfrac{1}{2^{n-1}}\sum _{j=0}^{2^{n-1}-1}i^{s_{2j}(n)}
\end{align}

\begin{example} 
To compute $\mu _4(A'_4)$ we apply \eqref{eq67} to obtain
\begin{align*}
\mu _4(A'_4)&=1+\tfrac{1}{16}-\tfrac{1}{8}\sum _{j=0}^7i^{s_{2j}(4)}=\tfrac{17}{16}-\tfrac{1}{8}(1-1-1-1-1+1-1-1)\\
&=\tfrac{25}{16}\hskip 22pc\qed
\end{align*}
\end{example}

By \eqref{eq67} we might obtain a closed form formula for $\mu _n(A'_n)$ if we can get more information about $s_j(n)$. But $s_j(n)$ reminds us of $z_j(n)$ in Section~5. In fact, a physicist could call $z_j(n)$ and $s_j(n)$ conjugate or complementary variables because $z_j(n)$ measures the number of $1$s which can be used to compute the average position and $s_j(n)$ measures the number of site changes which can be used to compute the average momentum of the particle.

The next lemma will be useful in finding a relationship between $s_j(n)$ and $z_j(n)$.

\begin{lem}    
\label{lem61}
$s_{2^{n+1}-1-j}(n+1)=s_j(n)+1, n\in\Natural, j=0,1,\ldots ,2^n-1$
\end{lem}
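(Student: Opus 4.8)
The plan is to unwind the definition of $s_j(n)$ and to exploit the fact that the map $j\mapsto 2^{n+1}-1-j$ is nothing but bitwise complementation within $n+1$ bits. First I would fix $j$ with $0\le j\le 2^n-1$ and write its $n$-path as $\gamma_0\gamma_1\cdots\gamma_n$, where $\gamma_0=0$ and $\gamma_1\cdots\gamma_n$ is the $n$-bit binary representation of $j$; then $s_j(n)$ is precisely the number of indices $i\in\{1,\dots,n\}$ with $\gamma_{i-1}\ne\gamma_i$. The essential point to keep in view is that this count runs over the \emph{full} path, including the pinned initial coordinate $\gamma_0=0$.

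Next I would identify the $(n+1)$-path attached to $2^{n+1}-1-j$. Since $j<2^n$, its $(n+1)$-bit representation is $0\gamma_1\cdots\gamma_n$, and subtracting it from $2^{n+1}-1$ flips every bit, producing the $(n+1)$-bit representation $1\bar\gamma_1\cdots\bar\gamma_n$ of $2^{n+1}-1-j$. Prefixing the forced initial site $0$, the $(n+1)$-path of $2^{n+1}-1-j$ is $0,1,\bar\gamma_1,\dots,\bar\gamma_n$. The key observation is that, since $\gamma_0=0$ gives $1=\bar\gamma_0$, this is exactly the symbol $0$ followed by the bitwise complement $\bar\gamma_0\bar\gamma_1\cdots\bar\gamma_n$ of the original $j$-path.

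From here the switch count is immediate and requires no computation. Complementing every entry of a string preserves, for each adjacent pair, whether the two entries agree or disagree, so the complemented path $\bar\gamma_0\cdots\bar\gamma_n$ has the same number of switches as the $j$-path, namely $s_j(n)$. Prefixing the symbol $0$ in front of this complemented path, whose leading symbol is $\bar\gamma_0=1$, introduces exactly one new disagreement and alters no other adjacent pair. Adding these contributions yields $s_{2^{n+1}-1-j}(n+1)=s_j(n)+1$, which is the assertion.

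I do not expect a genuine obstacle here; the only care needed is bookkeeping, namely remembering that $s_j(n)$ is taken along the whole path with $\gamma_0=0$ pinned, and that passing from $\Omega_n$ to $\Omega_{n+1}$ lengthens the representation by one bit whose value is forced to $1$ under complementation. A quick check against the stated values, for example $s_1(2)=1$ against $s_6(3)=2$, confirms both the indexing convention and the extra $+1$.
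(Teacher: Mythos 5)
Your proof is correct and follows essentially the same route as the paper: identify $2^{n+1}-1-j$ as the bitwise complement of the $(n+1)$-bit string $0\gamma_1\cdots\gamma_n$, note that complementation preserves the adjacent-pair switch count, and observe that prefixing the forced initial $0$ to a string now beginning with $1$ adds exactly one switch. Your bookkeeping about the pinned $\gamma_0=0$ is, if anything, slightly more careful than the paper's.
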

\begin{proof}
Let $j\in\Omega _n=\brac{0,1,\ldots ,2^n-1}$ and for $a\in\brac{0,1}$, let $a'=a+1 \pmod 2$. If $j$ has binary representation $j=a_0a_1\cdots a_n$, $a_0=0$, $a_k\in\brac{0,1}$, $k=1,\ldots ,n$, since
\begin{equation*}
a_0a_1\cdots a_n+a'_0a'_1\cdots a'_n=2^{n+1}-1
\end{equation*}
we have that
\begin{equation*}
(2^{n+1}-1)-j=0a'_0a'_1\cdots a'_n\in\Omega _{n+1}
\end{equation*}
Suppose $s_j(n)=k$ so $a_0a_1\cdots a_n$ has $k$ bit switches. These bit switches are in one-to-one correspondence with the bit switches in $a'_0a'_1\cdots a'_n$. Since $a'_0=1$, $0a'_0a'_1\cdots a'_n$ has one more bit switch so
\begin{equation*}
s_{2^n-1-j}(n+1)=k+1\qedhere
\end{equation*}
\end{proof}

To compare $s_j(n)$ with $z_j(n)$, let $y(n)$ be the vector
\begin{equation*}
y(n)=\paren{s_0(n),s_1(n),\ldots ,s_{2^n-1}(n)}
\end{equation*}
in $\real ^{2^n}$. Letting $T_n\to\real ^{2^n}\to\real ^{2^n}$ be the operator
\begin{equation*}
T(b_0,b_1,\ldots ,b_{2^n-1})=(b_{2^n-1},\ldots ,b_1,b_0)
\end{equation*}
it follows from Lemma~\ref{lem61} that $y(1)=(0,1)$ and
\begin{equation}                
\label{eq68}
y(n+1)=\paren{y(n),T_n\paren{y(n)+1_n}}
\end{equation}
We conclude that $s_j(n)$ has the same values as $z_j(n)$ only in a different order.

\begin{example} 
By \eqref{eq68} we have
\begin{align*}
y(2)&=\paren{y(1),T_1\paren{\paren{y(1)+1_1}}}=\paren{(0,1),T_1\paren{(0,1)+(1,1)}}\\
   &=\paren{(0,1),(2,1)}=(0,1,2,1)\\
   y(3)&=\paren{y(2),T_2\paren{y(2)+1_2}}=(0,1,2,1,2,3,2,1)\\
   y(4)&=(0,1,2,1,2,3,2,1,2,3,4,3,2,3,2,1)
\end{align*}
Compare these with $z(n)$ in Section~5.\hfill\qedsymbol
\end{example}

Since all paths begin with $0$ and paths of the form $2j$ end with $0$, $s_{2j}(n)$ must be even. Hence, $s_{2j}(n)=0\pmod 4$ or $s_{2j}(n)=2\pmod 4$. If
$s_{2j}(n)=0\pmod 4$, then $i^{s_{2j}(n)}$ contributes a $1$ to the sum in \eqref{eq67} and if $s_{2j}(n)=2\pmod 4$, then $i^{s_{2j}(n)}$ contributes a $-1$ to the sum in \eqref{eq67}. Since $s_j(n)$ and $z_j(n)$ have the same range, the number of contributed $1$s is $c_0(n)$ and the number of contributed $-1$s is $c_2(n)$ of Section~5. We conclude that 

\begin{equation}                
\label{eq69}
\mu _n(A'_n)=1+\tfrac{1}{2^n}-\tfrac{1}{2^{n-1}}\sqbrac{c_0(n)-c_2(n)}
\end{equation}
In summary, a solution to one of the equivalent Problems 2,3,4 will produce a solution to Problem~1.

\section{Solutions}  
This section presents three quite different methods for solving our posed problems. As we shall see, each method has its advantages and disadvantages.

This first method is suggested by the techniques used in Section~3 to obtain \eqref{eq33}. Applying the binomial theorem gives 
\begin{align}               
\label{eq71}
(1+i)^n=\sum _{j=0}^n\binom{n}{j}i^j
\intertext{and}               
\label{eq72}
(1-i)^n=\sum _{j-0}^n\binom{n}{j}(-1)^ji^j
\end{align}
Adding \eqref{eq71} and \eqref{eq72} we have
\begin{equation}                
\label{eq73}
(1+i)^n+(1-i)^n=2\sum _{j=0}^{2\lfloor\frac{n}{2}\rfloor}\brac{\binom{n}{j}i^j\colon j\hbox{ even}}=2\sum _{k=0}^{\lfloor\frac{n}{2}\rfloor}(-1)^k\binom{n}{2k}
\end{equation}
Subtracting \eqref{eq72} from \eqref{eq71} we have
\begin{align}                
\label{eq74}
(1+i)^n+(1-i)^n&=2\sum _{j=1}^{2\lfloor\frac{n-1}{2}\rfloor +1}\brac{\binom{n}{j}i^j\colon j\hbox{ odd}}\notag\\
  &=2\sum _{k=1}^{\lfloor\frac{n-1}{2}\rfloor +1}(-1)^k\binom{n}{2k-1}
\end{align}
Applying \eqref{eq73} gives
\begin{align*}
\cos\frac{n\pi}{4}&=\frac{e^{in\pi/4}+e^{-in\pi/4}}{2}=\frac{1}{2}\sqbrac{(e^{i\pi/4})^n+(e^{-i\pi/4})^n}\\
  &=2^{-\frac{n}{2}-1}\sqbrac{(1+i)^n+(1-i)^n}=2^{-n/2}\sum _{k=0}^{\lfloor\frac{n}{2}\rfloor}(-1)^k\binom{n}{2k}
\end{align*}
Hence,
\begin{equation}                
\label{eq75}
\sum _{k=0}^{\lfloor\frac{n}{2}\rfloor}(-1)^k\binom{n}{2k}=2^{n/2}\cos\frac{n\pi}{4}
\end{equation}
In a similar way, applying \eqref{eq74} gives
\begin{equation}                
\label{eq76}
\sum _{k=0}^{\lfloor\frac{n-1}{2}\rfloor}(-1)^k\binom{n}{2k+1}=2^{n/2}\sin\frac{n\pi}{4}
\end{equation}
Adding \eqref{eq35} and \eqref{eq75} we obtain
\begin{equation}                
\label{eq77}
\binom{n}{0}+\binom{n}{4}+\binom{n}{8}+\cdots +\binom{n}{4\lfloor\frac{n}{4}\rfloor}=2^{n-2}+2^{\frac{n}{2}-1}\cos\frac{n\pi}{4}
\end{equation}
Subtracting \eqref{eq75} from \eqref{eq36} we obtain
\begin{equation}                
\label{eq78}
\binom{n}{2}+\binom{n}{6}+\binom{n}{10}+\cdots +\binom{n}{4\lfloor\frac{n-2}{4}\rfloor +2}=2^{n-2}-2^{\frac{n}{2}-1}\cos\frac{n\pi}{4}
\end{equation}
Similarly, adding and subtracting \eqref{eq35} and \eqref{eq76} gives
\begin{align}               
\label{eq79}
\binom{n}{1}+\binom{n}{5}+\binom{n}{9}+\cdots +\binom{n}{4\lfloor\frac{n-1}{4}\rfloor +1}&=2^{n-2}+2^{\frac{n}{2}-1}\sin\frac{n\pi}{4}\\
\label{eq710}              
\binom{n}{3}+\binom{n}{7}+\binom{n}{11}+\cdots +\binom{n}{4\lfloor\frac{n-3}{4}\rfloor +3}&=2^{n-2}-2^{\frac{n}{2}-1}\sin\frac{n\pi}{4}
\end{align}
Combining \eqref{eq77}--\eqref{eq710} gives the following solution to Problem~2.

\begin{thm}    
\label{thm71}
For $n=1,2,\ldots $, and $j=0,1,2,3$ we have
\begin{equation*}
\binom{n}{j}+\binom{n}{n+4}+\binom{n}{n+8}+\cdots +\binom{n}{4\lfloor\frac{n-j}{4}\rfloor +j}=2^{n-2}+2^{\frac{n}{2}-1}\cos\frac{(n-2j)\pi}{4}
\end{equation*}
\end{thm}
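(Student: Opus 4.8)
The plan is to obtain the stated formula by unifying the four special cases already recorded in \eqref{eq77}--\eqref{eq710}. Writing $b_j(n)$ for the sum in \eqref{eq37}, those four equations assert precisely that $b_0(n)=2^{n-2}+2^{\frac{n}{2}-1}\cos\frac{n\pi}{4}$, that $b_1(n)=2^{n-2}+2^{\frac{n}{2}-1}\sin\frac{n\pi}{4}$, that $b_2(n)=2^{n-2}-2^{\frac{n}{2}-1}\cos\frac{n\pi}{4}$, and that $b_3(n)=2^{n-2}-2^{\frac{n}{2}-1}\sin\frac{n\pi}{4}$. Thus the entire content of the theorem is the observation that the single expression $2^{n-2}+2^{\frac{n}{2}-1}\cos\frac{(n-2j)\pi}{4}$ reproduces each of these four right-hand sides as $j$ runs through $0,1,2,3$.

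First I would apply the angle-subtraction formula to the cosine, writing
\begin{equation*}
\cos\frac{(n-2j)\pi}{4}=\cos\paren{\frac{n\pi}{4}-\frac{j\pi}{2}}=\cos\frac{n\pi}{4}\cos\frac{j\pi}{2}+\sin\frac{n\pi}{4}\sin\frac{j\pi}{2}.
\end{equation*}
Then I would evaluate the coefficients $\cos\frac{j\pi}{2}$ and $\sin\frac{j\pi}{2}$ at the four values $j=0,1,2,3$, where they take the values $(1,0)$, $(0,1)$, $(-1,0)$, $(0,-1)$ respectively. Substituting each pair shows that $\cos\frac{(n-2j)\pi}{4}$ equals $\cos\frac{n\pi}{4}$, $\sin\frac{n\pi}{4}$, $-\cos\frac{n\pi}{4}$, $-\sin\frac{n\pi}{4}$ in turn, matching \eqref{eq77}, \eqref{eq79}, \eqref{eq78}, \eqref{eq710} exactly. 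This completes the verification for each $j$ and hence proves the theorem.

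Since the analytic work has already been carried out in deriving \eqref{eq75} and \eqref{eq76} from the binomial theorem and combining them with \eqref{eq35}, there is no genuine obstacle at this stage; the argument is a short trigonometric bookkeeping check. If instead one wanted a self-contained derivation that produces the uniform formula in a single stroke, I would use the roots-of-unity filter with the primitive fourth root $i$: for $n\ge 1$,
\begin{equation*}
b_j(n)=\tfrac{1}{4}\sum_{m=0}^{3}i^{-mj}(1+i^m)^n=\tfrac{1}{4}\sqbrac{2^n+i^{-j}(1+i)^n+i^{j}(1-i)^n},
\end{equation*}
where the $m=2$ term drops out because $(1+i^2)^n=0$ for $n\ge 1$. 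Writing $1\pm i=\sqrt{2}\,e^{\pm i\pi/4}$ then collapses the two surviving complex terms into $2^{\frac{n}{2}+1}\cos\frac{(n-2j)\pi}{4}$, yielding the theorem at once. The only point requiring care in this alternative is the restriction $n\ge 1$, which is exactly what guarantees that the $m=2$ term is genuinely zero rather than the indeterminate $0^0$.
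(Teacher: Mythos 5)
Your proof is correct and takes essentially the same route as the paper: Theorem~\ref{thm71} is obtained there simply by ``combining \eqref{eq77}--\eqref{eq710},'' and your angle-subtraction evaluation of $\cos\frac{(n-2j)\pi}{4}$ at $j=0,1,2,3$ is exactly the bookkeeping that the paper leaves implicit in that word. Your supplementary roots-of-unity derivation is also correct and gives a clean self-contained alternative, but it is not needed to match the paper's argument.
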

This last method for solving Problem~2 had the advantage of being relatively short and concise, However, it was specialized and relied heavily on the binomial theorem. Our second method is suggested by Problem~3. Although longer than our previous method, it is more instructive because it illustrates the general technique of solving recurrence relations with initial conditions.

To solve the recurrence relation \eqref{eq43} we try a trial solution $w(n)=2^{\lambda n}$ where $\lambda$ is an unknown parameter to be determined. Substituting into \eqref{eq43} gives
\begin{equation}                
\label{eq711}
2^{3\lambda}2^{\lambda n}=4\,2^{2\lambda}2^{\lambda n}-6\,2^{\lambda}2^{\lambda n}+4\,2^{\lambda n}
\end{equation}
Cancelling the $2^{\lambda n}$ and letting $x=2^\lambda$ we obtain
\begin{equation}                
\label{eq712}
x^3-4x^2+6x-4=0
\end{equation}
We can factor \eqref{eq712} to get
\begin{equation}                
\label{eq713}
(x-2)(x^2-2x+2)=0
\end{equation}
The solutions of \eqref{eq73} are $x_1=2$, $x_2=1+i$, $x_3=1-i$. The corresponding values of $\lambda$ satisfy $2^{\lambda _1}=2$, $2^{\lambda _2}=1+i$, $2^{\lambda _3}=1-i$. We conclude that $\lambda _1=1$, $\lambda _2\ln 2=\ln (1+i)$ and $\lambda _3\ln 2=\ln (1-i)$. Hence, $\lambda _1=1$ and
\begin{align*}
\lambda _2&=\tfrac{1}{2}+\tfrac{i\pi}{4\ln 2}\\
\lambda _3&=\tfrac{1}{2}-\tfrac{i\pi}{4\ln 2}
\end{align*}
From the theory of recurrence relations (which is similar to differential equations) we conclude that the general solution of \eqref{eq43} has the form
\begin{align}               
\label{eq714}
w(n)&=c_12^{\lambda _1n}+c_22^{\lambda _2n}+c_32^{\lambda _3n}\notag\\
&=c_12^n+c_22^{n/2}e^{i\pi n/4}+c_32^{n/2}e^{-i\pi n/4}
\end{align}
where $c_1,c_2,c_3$ are arbitrary complex constants.

To find the constants $c_1,c_2,c_3$, we apply the initial conditions \eqref{eq44}. For $s(1)=s(2)=s(3)=1$ we have from \eqref{eq74} that 
\begin{align}               
\label{eq715}
2c_1+(1+i)c_2+(1-i)c_3&=1\notag\\
4c_1+2ic_2-2ic_3&=1\notag\\
8c_1-2(1-i)c_2-2(1+i)c_3&=1
\end{align}
Ordinarily, it would be a little tedious to solve the simultaneous equations \eqref{eq75} but we see by inspection that $c_1=c_2=c_3=1/4$. Hence, our particular solution becomes 
\begin{align*}
s(n)&=2^{n-2}+2^{\frac{n}{2}-2}e^{i\pi n/4}+2^{\frac{n}{2}-2}e^{-i\pi n/4}\\
   &=2^{n-2}+2^{\frac{n}{2}-1}\cos\frac{n\pi}{4}
\end{align*}
For $t(1)=1$, $t(2)=2$, $t(3)=3$ 23 have from \eqref{eq714} that
\begin{align*}
2c_1+(1+i)c_2+(1-i)c_3&=1\\
4c_1+2ic_2-2ic_3&=2\\
8c_1-2(1-i)c_2-2(1+i)c_3&=3
\end{align*}
Again, by inspection we have that $c_1=1/4$, $c_2=-i/4$, $c_3=i/4$. Hence, the particular solution becomes
\begin{equation*}
t(n)=2^{n-2}+2^{\frac{n}{2}-1}\sin\frac{n\pi}{4}
\end{equation*}
In a similar way, we obtain
\begin{align*}
u(n)&=2^{n-2}-2^{\frac{n}{2}-1}\cos\frac{n\pi}{4}\\
v(n)&=2^{n-2}-2^{\frac{n}{2}}\sin\frac{n\pi}{4}
\end{align*}
Of course, these solutions agree with the solutions \eqref{eq77}, \eqref{eq79}, \eqref{eq78}, \eqref{eq710}, respectively.

Our last method of solutions is mathematical induction. This is the shortest method but is the least instructive and has the disadvantage that we have to know the answer beforehand. We frame this in the form of the recurrence relations.

\begin{thm}    
\label{thm72}
Let $v_j(n)$, $n=1,2,\ldots\,$, $j=0,1,2,3$ be four sequences satisfying $v_j(n+1)=v_j(n)+v_{j-1}(n)$ and the initial conditions $v_0(1)=v_1(1)=1$,
$v_2(1)=v_3(1)=0$ where $v_{-1}(n)=v_3(n)$. The the solutions are
\begin{equation}                
\label{eq716}
v_j(n)=2^{n-2}+2^{\frac{n}{2}-1}\cos\frac{(n-2j)\pi}{4}
\end{equation}
\end{thm}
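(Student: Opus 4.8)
The plan is to prove the closed form \eqref{eq716} by induction on $n$, which is exactly the self-contained verification the recurrence invites. First I would dispatch the base case $n=1$ by evaluating the right-hand side of \eqref{eq716} at each $j$: since $\cos(\pi/4)=\cos(-\pi/4)=1/\sqrt2$ while $\cos(-3\pi/4)=\cos(-5\pi/4)=-1/\sqrt2$, the formula returns $v_0(1)=v_1(1)=\tfrac12+\tfrac12=1$ and $v_2(1)=v_3(1)=\tfrac12-\tfrac12=0$, matching the prescribed initial data.

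Before running the inductive step, I would record one structural observation that makes the whole argument uniform across $j=0,1,2,3$. The recurrence uses the cyclic convention $v_{-1}(n)=v_3(n)$, so I must check that the candidate formula already respects it. Evaluating \eqref{eq716} at the formal index $j=-1$ produces the cosine argument $(n+2)\pi/4$, whereas $j=3$ produces $(n-6)\pi/4$; these differ by exactly $2\pi$, so the two cosines agree and the formula is automatically $4$-periodic in $j$. This means I may apply the single closed form to the $v_{j-1}(n)$ term even when $j=0$, without treating that case separately.

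For the inductive step I would assume \eqref{eq716} holds at level $n$ and compute $v_j(n+1)=v_j(n)+v_{j-1}(n)$ directly. The two powers-of-two terms add to $2^{n-1}$, and the two cosine terms combine via the sum-to-product identity $\cos A+\cos B=2\cos\tfrac{A+B}{2}\cos\tfrac{A-B}{2}$. Taking $A=(n-2j)\pi/4$ and $B=(n-2j+2)\pi/4$, the half-difference is $-\pi/4$, contributing a factor $2\cos(\pi/4)=\sqrt2$, while the half-sum is $(n-2j+1)\pi/4=\bigl((n+1)-2j\bigr)\pi/4$. Absorbing the $\sqrt2$ into the prefactor promotes $2^{n/2-1}$ to $2^{(n+1)/2-1}$, and the whole expression becomes $2^{(n+1)-2}+2^{(n+1)/2-1}\cos\tfrac{((n+1)-2j)\pi}{4}$, which is precisely \eqref{eq716} at level $n+1$.

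The calculation is short, so the only genuine pitfall is the one flagged above: the cyclic boundary condition. If one forgets that the proposed formula is already $4$-periodic in $j$, the $j=0$ case (where the recurrence wraps around to $v_3$) appears to need a separate argument, and one might worry that the naive $j=-1$ evaluation fails to match $v_3$. Recognizing the $2\pi$ shift dissolves this difficulty and lets the sum-to-product step run identically for all four residues, which is why I would isolate that observation before beginning the induction.
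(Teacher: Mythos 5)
Your proof is correct and follows essentially the same route as the paper: induction on $n$, with the two cosine terms combined into $\sqrt{2}\cos\bigl(((n+1)-2j)\pi/4\bigr)$ (you via sum-to-product, the paper via $\cos A-\sin A=\sqrt{2}\cos(A+\pi/4)$, which is the same computation). Your explicit check that the closed form is $4$-periodic in $j$, so that the $j=0$ wraparound to $v_3$ needs no separate case, is a point the paper uses silently but does not state.
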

\begin{proof}
We use mathematical induction on $n$. By the initial conditions, \eqref{eq716} holds for $n=1$, $j=0,1,2,3$. Suppose \eqref{eq716} holds for $n$ and $j=0,1,2,3$. We then have
\begin{align*}
v_j(n+1)&=v_j(n)+v_{j-1}(n)\\
   &=2^{n-1}+2^{\frac{n}{2}-1}\sqbrac{\cos (n-2j)\pi/4+\cos\paren{n-2(j-1)}\pi/4}\\
   &=2^{n-1}+2^{\frac{n}{2}-1}\sqbrac{\cos (n-2j)\pi/4-\sin(n-2j)\pi/4}\\
   &=2^{n-1}+2^{\frac{n}{2}-1}2^{1/2}\cos\sqbrac{(n-2j)\pi/4+\pi/4}\\
   &2^{(n+1)-2}+2^{\frac{(n+1)}{2}-1}\cos\sqbrac{\paren{(n+1)-2j}\pi/4}
\end{align*}
The result now follows by induction.
\end{proof}

Finally, to solve Problem~1, by Theorem~\ref{thm72} we have
\begin{align*}
c_0(n)&=v_0(n)=2^{n-2}+2^{\frac{n}{2}-1}\cos\frac{n\pi}{4}\\
c_2(n)&=v_2(n)=2^{n-2}+2^{\frac{n}{2}-1}\cos\frac{(n-4)\pi}{4}=2^{n-2}-2^{\frac{n}{2}-1}\cos\frac{n\pi}{4}
\end{align*}
Applying \eqref{eq69} gives
\begin{equation*}
\mu (A'_n)=1+\frac{1}{2^n}-\frac{1}{2^{n-1}}2^{\frac{n}{2}}\cos\frac{n\pi}{4}=1+\frac{1}{2^n}-\frac{1}{2^{\frac{n}{2}-1}}\cos\frac{n\pi}{4}
\end{equation*}
which solves Problem~1. We conclude that $\lim\mu (A'_n)=1$.

\end{document}